\newtheorem{theorem}{Theorem}
\newtheorem{proper}{Property}
\newtheorem{defn}{Definition}
\DeclareMathOperator\erfc{erfc}
\def\sgn{\mathop{\rm sgn}\nolimits}
\begin{document}
\bstctlcite{ICC09_Ref2:BSTcontrol}

\title{Stable Distributions as Noise Models \\for Molecular Communication}

\author{ \IEEEauthorblockN{Nariman~Farsad$^1$,
											Weisi~Guo$^2$,
											Chan-Byoung~Chae$^3$,
											 and Andrew~Eckford$^4$}

\IEEEauthorblockA{$^1$ Department of Electrical Engineering, Stanford University, USA.\\
$^2$ School of Engineering, University of Warwick, Coventry, UK.\\
$^3$ School of Integrated Technology, Yonsei University, Korea.\\
$^4$ Department of Electrical Engineering and Computer Science, York University, Canada.\\
Corresponding Email: nfarsad@stanford.edu.
}
}
\maketitle

\begin{abstract}
In this work, we consider diffusion-based molecular communication timing channels. Three different timing channels are presented based on three different modulation techniques, i.e., i) modulation of the release timing of the information particles, ii) modulation on the time between two consecutive information particles of the same type, and iii) modulation on the time between two consecutive information particles of different types. We show that each channel can be represented as an additive noise channel, where the noise follows one of the subclasses of stable distributions. We provide expressions for the probability density function of the noise terms, and numerical evaluations for the probability density function and cumulative density function. We also show that the tails are longer than Gaussian distribution, as expected.
\end{abstract}

\begin{IEEEkeywords}
Molecular Communication, Channel Models, Noise Models, L\'evy Distribution, Stable Distributions.
\end{IEEEkeywords}

\section{Introduction}

Molecular communication is a biologically inspired form of communication, where chemical signals are used to transfer information \cite{Farsard_arXiv14}. It is possible to modulate  information on the information particles using different techniques such as: concentration \cite{kur12}, type \cite{kim13}, number \cite{far12NanoBio}, or time of release \cite{eck07}. Moreover, information particles can propagate from the transmitter to the receiver using diffusion \cite{mah10}, active transport \cite{far14TSP}, bacteria \cite{lio12}, and flow \cite{bic13}. Recently the possibility of molecular communication has been demonstrated using a tabletop experimental setup \cite{far13,far14INFOCOM}.

We consider diffusion-based molecular communication, where information is encoded on the time of release of molecules. Timing channels for diffusion-based molecular communication were first proposed in \cite{eck07}. A molecular communication timing channel based on additive inverse Gaussian distributed noise for flow induced channels was presented in \cite{sri12}, and tight bounds for the capacity of this channel was presented in \cite{li14}. In \cite{ata12}, a special type of timing modulation, where the order of release of consecutive molecules of different type is used to encode information is proposed. The time interval between release of two consecutive release of large number of information particles is proposed as a modulation scheme in \cite{kri13}. 

In this work we propose three general classes of timing channels for diffusion-based molecular communication: the regular timing channel, where information is encoded in the release timing of information particles (channel A); time between release modulation using the same type of information particles, where the information is encoded in the time between release of two consecutive particles of the same type (channel B); and time between release modulation using different types of information particles (channel C). In all three cases it is demonstrated that the channel can be reduced to an additive noise channel where the noise term falls in the stable distribution family \cite{nol15}.  In particular, for channel A the noise follows the well-known L\'evy distribution.

Stable distributions have been used in a number of fields to model noise. In \cite{he14}, alpha-stable distributed noise was used to create a more realistic noise model for room acoustic channels. In radio communications, symmetric alpha-stable distributions were used to model impulsive non-Gaussian noise that exists in some systems such as ultra-wide bandwidth (UWB) systems \cite{nir09,fan12}. Capacity bounds for a special class of alpha-stable additive noise channels had been provided in \cite{wan11,fah12}. 

There are only three classes of stable distribution with closed-form expressions for the probability density function (PDF) in term of elementary functions: Gaussian, Cauchy, and L\'evy. In this work, we derive closed-form expressions for the PDF of the noise terms in our channels in terms of the complex error function and Voigt functions \cite{abr11}, which are used in other fields of science such as physics. We numerically compare the stable-distributed noise densities and distribution functions to the Gaussian distribution, and show that the stable distribution exhibits longer tails. We present expressions for the asymptotic tail probability of the noise models and show that the expressions converge to the actual tail probabilities quickly.

The rest of this paper is organized as follows. In Section \ref{sec:model} we present three timing channel models for diffusion-based molecular communication. We then derive the PDF for the additive noise term in each channel model in Section \ref{sec:noiseModel}. Numerical evaluations of the PDF and the cumulative distribution function are presented in Section \ref{sec:numEval}, and expressions for the tail probabilities are provided. The concluding remarks are presented in Section \ref{sec:conc}.

%Molecular communication could be used in places where radio based communication fails or is inefficient: for example,  city infrastructure  monitoring in smart cities at macroscale~\cite{Wassell10}, and body area nanonetworks for health monitoring and targeted drug delivery at microscale~\cite{ata12CM}.

\section{Timing Channel Models}
\label{sec:model}
In this section we present three different timing channels based on three different timing modulation schemes for diffusion based molecular communication systems. In our models, we assume that there is no inter-symbol interference. First, we consider the timing channel proposed in \cite{eck09,sri12}, where the information is encoded in the release timing of a single information particle. Let $T_{x}$ be the release timing of the information particle, and $T_{y}$ be the time of arrival at the receiver. Then we have
\begin{align}
\label{eq:timingChA}
	\tag{A}
	T_{y} = T_{x} + T_n,
\end{align}
where $T_n$ is the random propagation delay of the information particle. $T_n$ is parametrized by the distance between the transmitter and the receiver and the diffusion coefficient of the information particle. 

One of the main challenges of this propagation scheme is the need for synchronization between the transmitter and the receiver. To overcome this challenge, time between release modulation (TBRM) could be used, where information is encoded in the time duration between two consecutive release of molecules. 
%In particular, we assume at each release a single information particle is released into the medium. 
Two cases are possible: either the two released information particles are the same, or the two released information particles are different. 

First, we consider the case where both information particles are the same. Let $T_{x_1}$ be the release timing of first information particle and $T_{x_2}$ be the release timing for the second information particle with $T_{x_2}>T_{x_1}$. We assume the information is encoded in $L_x = T_{x_2}-T_{x_1}$. Then using (\ref{eq:timingChA}), the channel model for this modulation scheme is given by:
\begin{align}
\label{eq:timingChB}
	|T_{y_2} - T_{y_1}|&= |T_{x_2}-T_{x_1} + T_{n_2} - T_{n_1}|,  \nonumber \\
	L_y &= | L_x+L_n|, \tag{B}
\end{align}
where $L_n = T_{n_2} - T_{n_1}$ is the random noise and $T_{n_2}$ and $T_{n_1}$ are independent and identically distributed noise terms in (\ref{eq:timingChA}).

Another modulation scheme is when two different types of information particles are used. Let $T_{x_a}$ be the release timing of type-$a$ information particle and $T_{x_b}$ be the release timing for the type-$b$ information particle. We assume the information is encoded in $D_x = T_{x_b}-T_{x_a}$. Unlike (\ref{eq:timingChB}) where $L_x$ is always positive, in this case $D_x$ can be positive or negative depending on the order that type-$a$ and type-$b$ information particles are released. Using (\ref{eq:timingChA}), the channel model for this scheme is given by:
\begin{align}
\label{eq:timingChC}
	T_{y_b} - T_{y_a} &= T_{x_b}-T_{x_a} + T_{n_b} - T_{n_a},  \nonumber \\
	Z_y &= Z_x+ Z_n, \tag{C}
\end{align}
where $Z_n = T_{n_b} - T_{n_a}$ is the random noise and $T_{n_b}$ and $T_{n_a}$ are independent noise terms in (\ref{eq:timingChA}).

\section{Timing Channel Noise Models}
\label{sec:noiseModel}
In this section, we will find the probability density function of the noise terms $T_n$, $L_n$, and $Z_n$ and discuss some of the properties of these random variables.

\subsection{Channel A}

First, we consider the channel in (\ref{eq:timingChA}) and the random propagation noise term $T_n$. If we assume that the receiver is absorbing the information particles, which is the case for many practical applications, $T_n$ is distributed according to the first hitting time distribution. In previous works, it was shown that the first hitting time for the flow induced diffusion in 1-dimensional (1D) space follows Inverse Gaussian distribution \cite{sri12}. In this work, we consider the diffusion channel with no flows. In this case, $T_n$ is a L\'evy distributed random variable. The probability density function (PDF) of a non-negative L\'evy-distributed random variable $X$ is given by
\begin{align}
	f(x;\mu,c) = \sqrt{\frac{c}{2 \pi (x-\mu)^3}}\exp \left( -\frac{c}{2(x-\mu)} \right),
\end{align} 
where $\mu$ and $c$ are parameters of the L\'evy distribution. The characteristic function for a L\'evy distributed random variable is given by
\begin{align}
	\varphi (t;\mu,c) = \exp \left( j \mu t -\sqrt{-2 j c t } \right),
\end{align}
where $j = \sqrt{-1}$ is the imaginary number. We use the notation $\sim\text{L\'evy}(\mu,c)$ to represent a L\'evy distributed random variable with parameters $\mu$ and $c$. Using this notation the additive noise is given by $T_n \sim \text{L\'evy}(0,\frac{d^2}{2D})$, where $d$ is the distance between the transmitter and the receiver and $D$ is the diffusion coefficient. Therefore, we have
\begin{align}
	f_{T_n}(t_n) = \frac{d}{\sqrt{4 \pi D (t_n)^3}}\exp \left( -\frac{d^2}{4Dt_n} \right),
\end{align} 
Similarly, the conditional PDF $P(T_{y}|T_{x}) \sim \text{L\'evy}(T_{x},\frac{d^2}{2D})$. The L\'evy distributed noise holds for 1D diffusion and also for 3D diffusion with a spherical absorbing receiver with an scaling parameter \cite{yilmaz20143dChannelCF,Farsard_arXiv14}.

\subsection{Channel B}

To find the noise distribution for the channel in (\ref{eq:timingChB}), we consider a class of probability distributions known as {\em stable distributions} \cite{zol86-book,nol15}. The L\'evy distribution is a part of stable distributions.
\begin{defn}
	A random variable $X$ has a stable distribution if for two independent copies $X_1$ and $X_2$, and positive constants $a$, $b$, $c$, and $d \in \mathbb{R}$ the following holds
\begin{align*}
	aX_1 + bX_2 \overset{d}{=} cX+d,
\end{align*}
where $\overset{d}{=}$ is equality in distribution.
\end{defn} 

Generally, stable distributions are defined by their characteristic function
\begin{align}
	\varphi (t;\mu,c,\alpha,\beta) = \exp \left[ j \mu t -| c t |^\alpha (1-j\beta \sgn(t) \Phi ) \right],
\end{align}
where $\sgn(.)$ is the sign function (i.e. sign of $t$), $-\infty<\mu<\infty$, $c\geq0$, $0<\alpha \leq 2$, $-1\leq \beta \leq 1$, and 
\begin{align}
\label{eq:phi}
\Phi =
\begin{cases}
	\tan (\pi \alpha / 2) &  \mbox{if } \alpha \neq 1 \\
	-\frac{2}{\pi} \log( |t|) & \mbox{if } \alpha = 1
\end{cases}.
\end{align}
Gaussian distribution, belongs to this family of distributions with $\alpha =2$, and L\'evy distribution with $\alpha = 1/2$ and $\beta=1$. We use the notation $\sim \mathcal{S}(\mu,c,\alpha,\beta)$ to represent a stable distribution with parameters $\mu$, $c$, $\alpha$, and $\beta$. The following are some of the important properties of stable distributions.

\begin{proper}
	If a random variable $X \sim \mathcal{S}(\mu,c,\alpha,\beta)$, and random variable 
	\begin{align*}
		Y = \frac{X-\mu}{c},
	\end{align*}
	then $f(x)dx =f(y)dy$, and $Y$ is the standard form of $X$.	
\end{proper}

\begin{proper}
	Stable random variables with $\beta=0$ have symmetric PDFs.	
\end{proper}

With these definitions we now model the noise term $L_n$ in (\ref{eq:timingChB}). 
\begin{theorem}
The characteristic function for the noise term $L_n$ is given by
\begin{align*}
\varphi \left( t;\frac{\sqrt{2}d}{\sqrt{D}} \right) =  \exp \left[ -\frac{\sqrt{2}d}{\sqrt{D}}\sqrt{| t |} \right],
\end{align*}
where $d$ is the distance between the transmitter and the receiver and $D$ is the diffusion coefficient of the information particle. Therefore, $L_n \sim \mathcal{S}(0,\frac{2d^2}{D},\frac{1}{2},0)$.
\end{theorem}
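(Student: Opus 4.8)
The plan is to compute the characteristic function of $L_n$ directly from the characteristic function of the L\'evy distribution stated above, and then match the result term-by-term to the general stable characteristic function to read off the four parameters.

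First I would record that $T_{n_1}$ and $T_{n_2}$ are i.i.d.\ with $T_{n_i} \sim \text{L\'evy}(0,\frac{d^2}{2D})$, so each has characteristic function
\begin{align*}
\varphi_{T_n}(t) = \exp\left(-\sqrt{-2j \cdot \tfrac{d^2}{2D}\cdot t}\right) = \exp\left(-\tfrac{d}{\sqrt{D}}\sqrt{-jt}\right),
\end{align*}
using the principal branch of the square root. Since $L_n = T_{n_2}-T_{n_1}$ is a sum of two independent terms and $\varphi_{-T_{n_1}}(t)=\varphi_{T_n}(-t)$, independence gives
\begin{align*}
\varphi_{L_n}(t) = \varphi_{T_n}(t)\,\varphi_{T_n}(-t) = \exp\left(-\tfrac{d}{\sqrt{D}}\left(\sqrt{-jt}+\sqrt{jt}\right)\right).
\end{align*}

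The key computational step is simplifying $\sqrt{-jt}+\sqrt{jt}$. For $t>0$, writing $j=e^{j\pi/2}$ and taking principal roots gives $\sqrt{jt}=\sqrt{t}\,e^{j\pi/4}$ and $\sqrt{-jt}=\sqrt{t}\,e^{-j\pi/4}$, whose sum is $2\sqrt{t}\cos(\pi/4)=\sqrt{2t}$; the same holds for $t<0$ (equivalently, the two roots are complex conjugates, so their sum is twice a real part), yielding $\sqrt{2|t|}$ in general. Hence $\varphi_{L_n}(t)=\exp\left(-\tfrac{\sqrt{2}d}{\sqrt{D}}\sqrt{|t|}\right)$, which is the claimed expression. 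This branch-of-the-square-root bookkeeping, together with checking the $t<0$ case, is the only place any care is needed; everything else is routine.

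Finally I would compare with the general stable characteristic function $\varphi(t;\mu,c,\alpha,\beta)=\exp\left[j\mu t-|ct|^\alpha\left(1-j\beta\sgn(t)\Phi\right)\right]$. Setting $\mu=0$, $\beta=0$, and $\alpha=\tfrac12$ collapses it to $\exp(-|ct|^{1/2})=\exp(-\sqrt{c}\,\sqrt{|t|})$, and matching $\sqrt{c}=\sqrt{2}d/\sqrt{D}$ forces $c=2d^2/D$, so $L_n\sim\mathcal{S}\!\left(0,\tfrac{2d^2}{D},\tfrac12,0\right)$. As a sanity check one could instead invoke the stability property: a difference of two i.i.d.\ L\'evy$(\cdot)$ variables is automatically strictly stable with index $\alpha=1/2$ and, being symmetric, has $\beta=0$, leaving only the scale $c$ to pin down — but the characteristic-function route delivers all parameters simultaneously and is the cleaner write-up.
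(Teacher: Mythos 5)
Your proof is correct and follows essentially the same route as the paper: both factor $\varphi_{L_n}(t)=\varphi_{T_n}(t)\varphi_{T_n}(-t)$ by independence and then read off the parameters from the stable characteristic function. The only cosmetic difference is that the paper starts from the stable-family form $\exp\!\left[-\sqrt{|ct|}\,(1-j\sgn(t))\right]$, where the imaginary parts cancel immediately, whereas you start from $\exp\!\left(-\sqrt{-2jct}\right)$ and do the (correct) principal-branch bookkeeping to get $\sqrt{-jt}+\sqrt{jt}=\sqrt{2|t|}$.
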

\begin{proof}
Since $L_n = T_{n_2}+( - T_{n_1})$ with $T_{n_2}, T_{n_1} \sim \mathcal{S}(0,c,\frac{1}{2},1)$, where $c = \frac{d^2}{2D}$. Since $T_{n_1}$ and $T_{n_2}$ are independent, the characteristic function for $L_n$ is given by
\begin{align}
\varphi_{L_n} (t) &=  \varphi_{T_{n_2}} (t) \varphi_{T_{n_1}} (-t) \\
						&= \exp \left[ -\sqrt{| c t |} (1- j \sgn(t)) \right]  \times \nonumber \\ 
						& ~~~~~~~~~~~~~~~~~~~~\exp \left[ -\sqrt{| c t |} (1+ j \sgn(t)) \right] \\
						&= \exp \left[ -\sqrt{|4c t |}  \right]
\end{align}
\end{proof}

Only the PDFs of three classes of stable distributions are known to have closed-form expressions in terms of elementary functions: the Gaussian distribution with $\alpha=2$ (the value of $\beta$ does not matter in this case and can be assumed to equal zero), the L\'evy distribution with $\alpha =0.5$ and $\beta=1$, and Cauchy distribution with $\alpha=1$ and $\beta=0$. To find an expression for the PDF of the noise term $L_n$ in (\ref{eq:timingChB}), we use Property 1, and define the PDF for the standardized distribution with $\frac{2d^2}{D}=1$. Using Property 1, the standard PDF could be used to calculate probabilities involving non-standard random variables just like the way the standard Gaussian PDF could be used to calculate probabilities involving non-standard Gaussian random variables.

The PDF of the standardized stable distribution can be represented by the integral \cite{cul61}
\begin{align}
f(x;\alpha,\beta) = \frac{1}{\pi} \int_0^\infty e^{-t^\alpha} \cos[xt+\beta t^\alpha \Phi],
\end{align}
where $\Phi$ is given in (\ref{eq:phi}). This integral reduces to \cite{cul61}
\begin{align}
\label{eq:alphaHalf}
f(x;1/2,\beta) = \Re \left\{ \frac{z}{\pi x} [ \sqrt{\pi} e^{-z^2} - 2 j F(z)] \right\},
\end{align}
where
\begin{align}
F(z) = e^{-z^2} \int_0^z e^{t^2} dt
\end{align}
is the Dawson's Integral \cite{nist10}, and 
\begin{align}
\label{eq:z}
z = \frac{1+\beta - j(1-\beta)}{2\sqrt{2x}}.
\end{align}
It is possible to rewrite (\ref{eq:alphaHalf}) in terms of the complex error function, also known as Faddeeva function or the Kramp function \cite{nist10}
\begin{align}
w(z) = e^{-z^2} \left( 1+ \frac{2 j}{\sqrt{\pi}} \int_0^z e^{t^2} dt \right) = e^{-z^2} \erfc(-jz),
\end{align} 
where $\erfc(.)$ is the complementary error function. Using the relation \cite{nist10}
\begin{align}
\label{eq:dawsonFadev}
F(z) =0.5 j \sqrt{\pi}( e^{-z^2} -w(z)),
\end{align}
and the property $w(-z) = 2 e^{-z^2} - w(z)$ , we can rewrite (\ref{eq:alphaHalf}) as
\begin{align}
\label{eq:alphaHalf2}
f(x;1/2,\beta) = \Re \left\{ \frac{z}{\sqrt{\pi} x} w(-z) \right\}.
\end{align}
One of the benefits of writing the PDF in terms of the complex error function is that there are a large body of work that considered calculating it numerically. Moreover, if $z = a+j b$, for $b>0$ the complex error function can be represented by its real and imaginary parts as
\begin{align}
w(a+ j b) = K(a,b)+j L(a,b),~~~b>0,
\end{align}
where 
\begin{align}
\label{eq:ReVoigt}
K(a,b) = \frac{1}{\sqrt{\pi}} \int_0^\infty \exp(-t^2/4)\exp(-bt)\cos(at),~~~b>0 
\end{align}
and
\begin{align}
\label{eq:ImVoigt}
L(a,b) = \frac{1}{\sqrt{\pi}} \int_0^\infty \exp(-t^2/4)\exp(-bt)\sin(at),~~~b>0 
\end{align}
are the real and imaginary Voigt functions which are used widely in many fields of physics, astronomy, and chemistry and can be computed numerically. 

Using Property 2, the probability density function of $L_n$ is symmetric. Therefore, the probability density function for $L_n\geq0$ is sufficient for characterizing the PDF. Since $\beta = 0$, when $L_n>0$ we can write $z = p_{l_n} - j p_{l_n}$ where $p_{l_n}= 1/\sqrt{8l_n}$. Using (\ref{eq:alphaHalf2})-(\ref{eq:ImVoigt}) the standardized noise term $L_n$ when $L_n\geq0$ has the PDF 
\begin{align}
\label{eq:fLn}
f(l_n) = 
\begin{cases}
\frac{1}{\sqrt{8 \pi l_n^3}}\left[ K(-p_{l_n},p_{l_n})+L(-p_{l_n},p_{l_n})\right] & l_n>0 \\
\frac{2}{\pi} & l_n=0
\end{cases},
\end{align}
where the second term follows from \cite{zol86-book}. The PDF for $L_n<0$ is then given by $f(-l_n)$ due to symmetry. 
%It must be mentioned that it is als using \cite{nol97} we can also write the PDF as 
%\begin{align}
%\label{eq:fLn2}
%f(l_n) = 
%\begin{cases}
%\frac{1}{2\pi |l_n|} \int_0^{\pi/2} \tan\theta \sec\theta \exp \left( -\frac{\tan\theta \sec\theta}{2|l_n|}\right)& l_n\neq0 \\
%\frac{2}{\pi} & l_n=0
%\end{cases}.
%\end{align}

\subsection{Channel C}
The channel noise $Z_n$ given in (\ref{eq:timingChC}) can be different from $L_n$ since two different types of information particles can be used with different diffusion coefficients. Let $D_a$ be the diffusion coefficient of information particle $a$ and $D_b$ be the diffusion coefficient for the information particle $b$. Also, without loss of generality assume particle $a$ is released first followed by particle $b$. We now model the noise term $Z_n$ in (\ref{eq:timingChC}).
\begin{theorem}
The characteristic function for the noise term $Z_n$ is given by 
\small
\begin{align*}
&\varphi \left( t;\frac{d^2(\sqrt{D_a}+\sqrt{D_b})^2}{2D_aD_b}, \frac{\sqrt{D_a}-\sqrt{D_b}}{\sqrt{D_a}+\sqrt{D_b}}\right) =  \\
&~~\exp \left[ -\frac{d(\sqrt{D_a}+\sqrt{D_b})}{\sqrt{2D_aD_b}}\sqrt{| t |} \left( 1- j \frac{\sqrt{D_a}-\sqrt{D_b}}{\sqrt{D_a}+\sqrt{D_b}}\sgn(t)    \right) \right],
\end{align*}
\normalsize  where $d$ is the distance between the transmitter and the receiver and $D_a$ and $D_b$ are the diffusion coefficient of the information particles. Therefore, 
\begin{align*}
Z_n \sim \mathcal{S} \left( 0,\frac{d^2(\sqrt{D_a}+\sqrt{D_b})^2}{2D_aD_b},\frac{1}{2},\frac{\sqrt{D_a}-\sqrt{D_b}}{\sqrt{D_a}+\sqrt{D_b}} \right).
\end{align*}
\end{theorem}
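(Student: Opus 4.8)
The plan is to mirror the proof of Theorem~1: express $Z_n$ as a sum of two independent scaled L\'evy variables, multiply their characteristic functions, and simplify. First I would record that, by the Channel~A analysis of Section~\ref{sec:noiseModel}, the two propagation delays satisfy $T_{n_a}\sim\mathcal{S}(0,c_a,\tfrac12,1)$ and $T_{n_b}\sim\mathcal{S}(0,c_b,\tfrac12,1)$ with $c_a=\tfrac{d^2}{2D_a}$ and $c_b=\tfrac{d^2}{2D_b}$, and that (since $\tan(\pi\alpha/2)=1$ at $\alpha=\tfrac12$) their characteristic functions are $\varphi_{T_{n_a}}(t)=\exp[-\sqrt{|c_a t|}\,(1-j\sgn(t))]$ and likewise for $T_{n_b}$.

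Next, using independence and $Z_n=T_{n_b}+(-T_{n_a})$, I would write $\varphi_{Z_n}(t)=\varphi_{T_{n_b}}(t)\,\varphi_{T_{n_a}}(-t)$. Evaluating the second factor at $-t$ flips the sign of $\sgn(\cdot)$ while leaving $|t|$ unchanged, so the product collapses to $\exp\!\big[-\sqrt{|t|}\,\big((\sqrt{c_b}+\sqrt{c_a})-j\sgn(t)(\sqrt{c_b}-\sqrt{c_a})\big)\big]$. The remaining work is bookkeeping: substitute $\sqrt{c_a}=d/\sqrt{2D_a}$ and $\sqrt{c_b}=d/\sqrt{2D_b}$, combine over a common denominator to get $\sqrt{c_a}+\sqrt{c_b}=\tfrac{d(\sqrt{D_a}+\sqrt{D_b})}{\sqrt{2D_aD_b}}$ and $\sqrt{c_b}-\sqrt{c_a}=\tfrac{d(\sqrt{D_a}-\sqrt{D_b})}{\sqrt{2D_aD_b}}$, and factor $(\sqrt{D_a}+\sqrt{D_b})$ out of the bracket; this yields exactly the claimed expression for $\varphi(t;\cdot,\cdot)$.

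Finally I would read off the stable parameters by matching against $\varphi(t;\mu,c,\alpha,\beta)=\exp[\,j\mu t-|ct|^\alpha(1-j\beta\sgn(t)\Phi)\,]$ with $\Phi=\tan(\pi\alpha/2)$: the absence of a term linear in $t$ gives $\mu=0$; the exponent $\sqrt{|t|}$ forces $\alpha=\tfrac12$ (hence $\Phi=1$); the coefficient of $\sqrt{|t|}$ gives $c=\tfrac{d^2(\sqrt{D_a}+\sqrt{D_b})^2}{2D_aD_b}$; and the ratio appearing in the imaginary part gives $\beta=\tfrac{\sqrt{D_a}-\sqrt{D_b}}{\sqrt{D_a}+\sqrt{D_b}}$, which lies in $[-1,1]$ since $D_a,D_b>0$, so $Z_n\sim\mathcal{S}(0,c,\tfrac12,\beta)$. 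I do not expect a genuine obstacle: the computation is routine once Theorem~1 is in place. The only point requiring care is that negating $T_{n_a}$ sends $\beta=1\mapsto\beta=-1$ (equivalently, flips the sign of the $j\sgn(t)$ term), and it is precisely this sign flip, combined with the unequal scales $c_a\neq c_b$, that produces the nonzero skewness parameter $\beta\neq0$ whenever $D_a\neq D_b$.
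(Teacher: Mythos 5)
Your proposal is correct and follows essentially the same route as the paper's proof: decompose $Z_n=T_{n_b}+(-T_{n_a})$, multiply the two L\'evy characteristic functions with the $\sgn$ flip in the second factor, and read off $(\mu,c,\alpha,\beta)$ from the resulting exponent. If anything, your sign bookkeeping for the imaginary part (tracking $\sqrt{c_b}-\sqrt{c_a}$ consistently through to $\frac{\sqrt{D_a}-\sqrt{D_b}}{\sqrt{D_a}+\sqrt{D_b}}$) is tidier than the paper's intermediate lines, which contain two compensating sign slips but arrive at the same final parameters.
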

\begin{proof}
Since $Z_n = T_{n_b}+( - T_{n_a})$ with $T_{n_a}, T_{n_b} \sim \mathcal{S}(0,c_i,\frac{1}{2},1)$, where $c_i = \frac{d^2}{2D_i}$ for $i \in\{ a,b \}$. Since $T_{n_a}$ and $T_{n_b}$ are independent, the characteristic function for $Z_n$ is given by
\begin{align}
\varphi_{Z_n} (t) &=  \varphi_{T_{n_b}} (t) \varphi_{T_{n_a}} (-t) \\
						&= \exp \left[ -\sqrt{c_b}\sqrt{| t |} (1- j \sgn(t)) \right]  \times \nonumber \\ 
						& ~~~~~~~~~~~~~~~\exp \left[ -\sqrt{c_a}\sqrt{| t |} (1+ j \sgn(t)) \right] \\
						&= \exp \left[ -\sqrt{|t|}(\sqrt{c_b}+\sqrt{c_a}-j\sgn(t)(\sqrt{c_a}-\sqrt{c_b}))  \right] \\
						&= \exp \left[ -(\sqrt{c_b}+\sqrt{c_a})\sqrt{|t|} \right. \nonumber \\ 
						& ~~~~~~~~~~~~~~~\left. \left( 1-j\sgn(t)\frac{\sqrt{c_a}-\sqrt{c_b}}{\sqrt{c_b}+\sqrt{c_a}}\right)  \right] \\
						&= \exp \left[ -\frac{d(\sqrt{D_a}+\sqrt{D_b})}{\sqrt{2D_aD_b}}\sqrt{| t |} \right. \nonumber \\ 
		&~~~~~~~~~~~~~~~~~~~~\left. \left( 1- j \frac{\sqrt{D_a}-\sqrt{D_b}}{\sqrt{D_a}+\sqrt{D_b}}\sgn(t) \right)\right] .
\end{align}
\end{proof} 

When the diffusion coefficients of the two particles are almost equal, $Z_n$ has the same distribution as $L_n$ (i.e. $\beta=0$). When $\sqrt{D_a}\ll\sqrt{D_a}$ or $\sqrt{D_a}\gg\sqrt{D_a}$, $\beta=\pm1$ and hence $Z_n$ is L\'evy distributed. Therefore, channel (C) can be reduced to channel (A), when one information particle has a much higher diffusion coefficient than the other, with the added benefit that no synchronization is required between the transmitter and the receiver.

For the general case, we have $\beta = (\sqrt{D_a}-\sqrt{D_b})/(\sqrt{D_a}+\sqrt{D_b})$. We can write (\ref{eq:z}) as $z=p_{x} - jq_{x}$ when $x>0$, where $p_{x}=(1+\beta)/(\sqrt{8|x|})$ and $q_{x}=(1-\beta)/(\sqrt{8|x|})$. Similarly, we can write (\ref{eq:z}) as $z=-q_{x} - jp_{x}$ when $x<0$. Then using (\ref{eq:alphaHalf2}) and the Voigt functions decomposition of the Faddeeva function (\ref{eq:ReVoigt}) and (\ref{eq:ImVoigt}) the PDF of the standardized distribution is given by

\begin{align}
\label{eq:fZn}
f(z_n; \beta) = 
\begin{cases}
\frac{1}{\sqrt{8 \pi z_n^3}} \bigg[ (1+\beta) K(-p_{z_n},q_{z_n})  &  \\
\quad \quad  \quad ~~+(1-\beta)L(-p_{z_n},q_{z_n})\bigg], & z_n>0\\
\frac{2(1-\beta^2)}{\pi(1+\beta^2)^2}, & z_n=0 \\
\frac{1}{\sqrt{8 \pi |z_n|^3}}\bigg[(1-\beta)K(q_{z_n},p_{z_n}) &  \\
\quad \quad \quad  ~~-(1+\beta) L(q_{z_n},p_{z_n}) \bigg], & z_n<0
\end{cases},
\end{align}
where the second term follows from \cite{zol86-book}.
%We can write (\ref{eq:z}) as $z=p_{z_n} - jq_{z_n}$ where $p_{z_n}=(1+\beta)/(\sqrt{8|z_n|})$ and $q_{z_n}=(1-\beta)/(\sqrt{8|z_n|})$. Then using (\ref{eq:alphaHalf2}) and the Voigt functions decomposition of the Faddeeva function (\ref{eq:ReVoigt}) and (\ref{eq:ImVoigt}) the PDF of the standardized distribution is given by
%\begin{align}
%\label{eq:fZn}
%f(z_n) = 
%\begin{cases}
%\frac{1}{\sqrt{8 \pi |z_n|^3}}\left[(1+\beta) K(-p_{z_n},q_{z_n})+(1-\beta)L(-p_{z_n},q_{z_n})\right] & z_n\neq0 \\
%\frac{2(1-\beta^2)}{\pi(1+\beta^2)^2} & z_n=0
%\end{cases}.
%\end{align}

%Without loss of generality assume $D_b>D_a$ and hence $\beta<0$. Using (\ref{eq:z}) we have $z = a+bi$, where $a=(1+\beta)/(\sqrt{8x})$ and $b=-(1-\beta)/(\sqrt{8x})$. Therefore, the imaginary part of $-z$ is always positive, when $x>0$ and we could use (\ref(eq:alphaHalf2)) and the Faddeeva function decomposition to get the PDF using the same technique in the previous section. To find an expression for the case when $x<0$, we substitute (\ref{eq:dawsonFadev}) into (\ref{eq:alphaHalf}) and get

\section{Numerical Evaluation and Tail Probabilities}
\label{sec:numEval}
As was shown in the previous section, it is possible to write the PDF for the noise terms in channel (\ref{eq:timingChB}) and (\ref{eq:timingChC}) in terms of real and imaginary Voigt functions. These functions can be numerically calculated using efficient algorithms such as \cite{zag11}. Moreover, for the case of general stable distributions with any parameters $\mu$, $c$, $\alpha$, and $\beta$ it is possible to calculate the PDFs and the cumulative distribution functions (CDF)s numerically using the fast Fourier transform or by numerically solving definite integrals \cite{nol97}. In this section, we plot the PDF and CDF of the noise terms of channels (\ref{eq:timingChA}-\ref{eq:timingChC}) and compare the PDF to the Gaussian PDF, which is typically assumed in the literature.
\begin{figure}
	\begin{center}
	\includegraphics[width=3.4in]{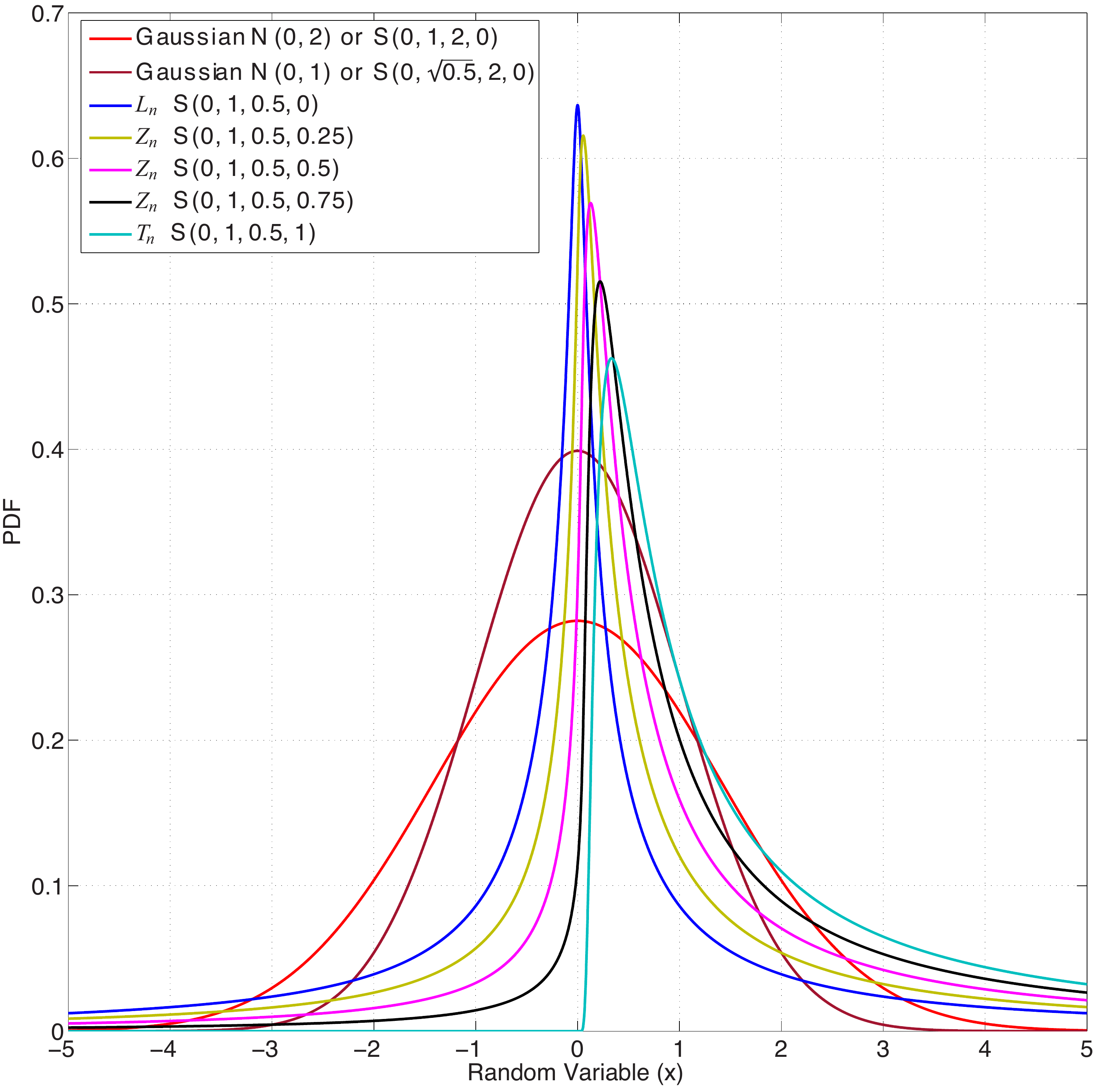}
	\end{center}
	\caption{\label{fig:stablePDF} The probability density function of different standardized stable distributions.}
\end{figure}
\begin{figure}
	\begin{center}
	\includegraphics[width=3.4in]{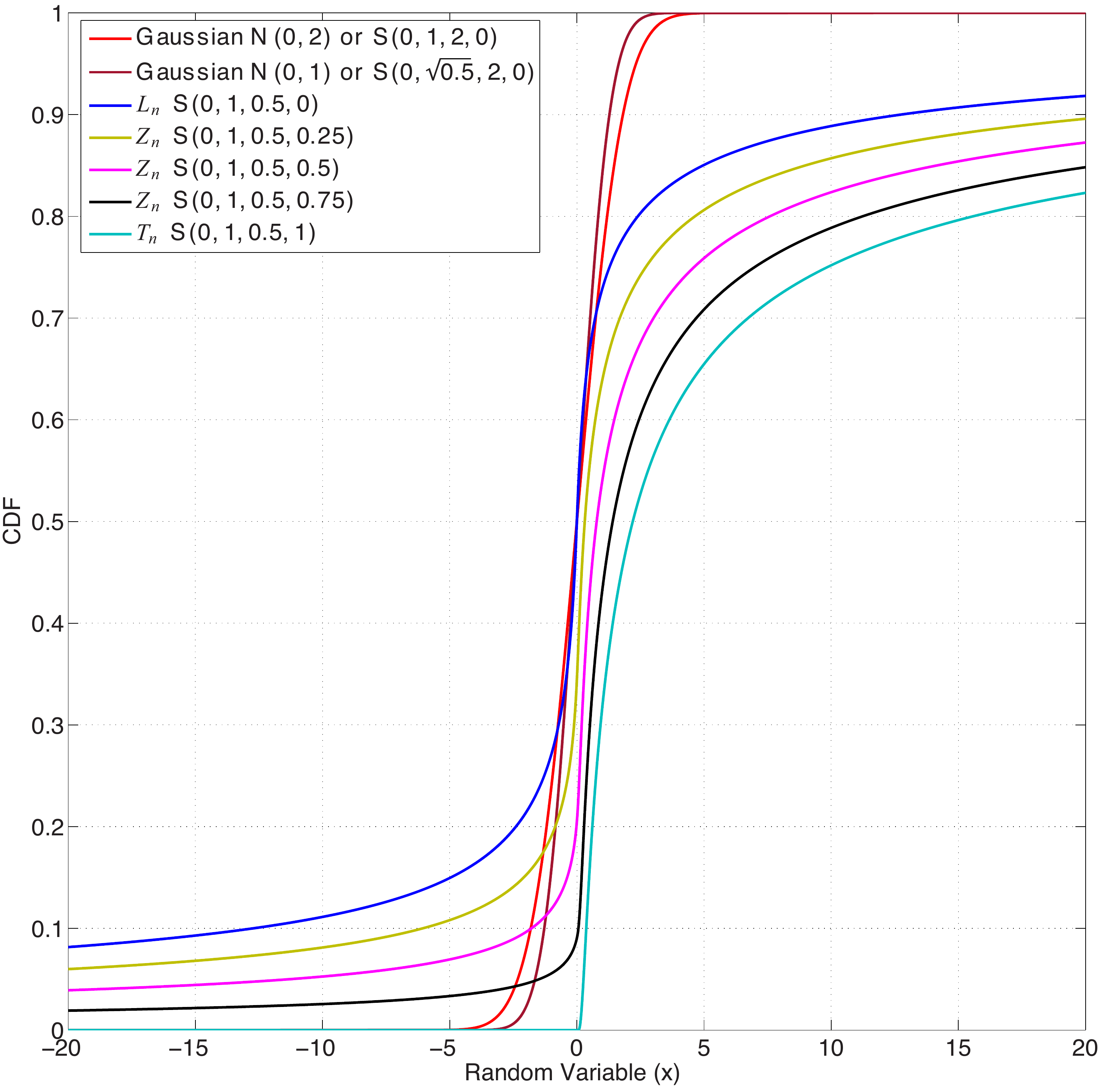}
	\end{center}
	\caption{\label{fig:stableCDF} The cumulative distribution function of different standardized stable distributions.}
\end{figure}

Fig.~\ref{fig:stablePDF} shows the PDF of standardized stable distribution noise terms in channels (\ref{eq:timingChA})-(\ref{eq:timingChC}), as well as the standard Gaussian distribution. As can be seen from the plots, the PDF of the noise terms in all three channels are very different from Gaussian noise. The peaks in the PDF tend to be narrower, while the tails tend to be longer. Moreover, as can be seen the larger the parameter $\beta$ the more asymmetric the PDF. For $\beta=1$ the PDF is the standard L\'evy distribution which is non-zero only for positive values. The cumulative distribution function (CDF) of the standardized stable distributions are shown in Fig.~\ref{fig:stableCDF}. Again it can be seen that each distribution is quite different and that non-Gaussian stable distributions exhibit long tails. 

To compare the tails of each distribution, we use the asymptotic approximation presented in \cite{fof99}. In particular, if $X$ is a standardized stable random variable with parameters $0<\alpha<2$ and $\beta$, then as $x\rightarrow\infty$,
\begin{align}
	P(X>x;\alpha,\beta) \approx \frac{1+\beta}{\pi x^\alpha} \Gamma(\alpha)\sin\bigg(\frac{\alpha \pi}{2}\bigg). 
\end{align} 
For the noise terms in our channel model, $\alpha=1/2$ and hence
\begin{align}
\label{eq:tailApprxS}
	P(X>x;0.5,\beta) \approx \frac{1+\beta}{\sqrt{2 \pi x}},
\end{align} 
as $x\rightarrow\infty$. For the standard normal distribution, the tail probability is approximately,
\begin{align}
\label{eq:tailApprxG}
	P(X>x;2,0) \approx \frac{\exp(-x^2/2)}{x\sqrt{2\pi}},
\end{align} 
as  $x\rightarrow\infty$. This proves the longer tails of non-Gaussian stable distributions.

To measure the accuracy of this approximation, in Fig.~\ref{fig:stableTail} we plot the tail probability $P(X>x)$ and the approximate tail probabilities from (\ref{eq:tailApprxS}) and (\ref{eq:tailApprxG}). In the plot the circles indicate the approximate values. It can be seen that the asymptotic approximation of the tail probabilities quickly converge to the actual probability.  
\begin{figure}
	\begin{center}
	\includegraphics[width=3.4in]{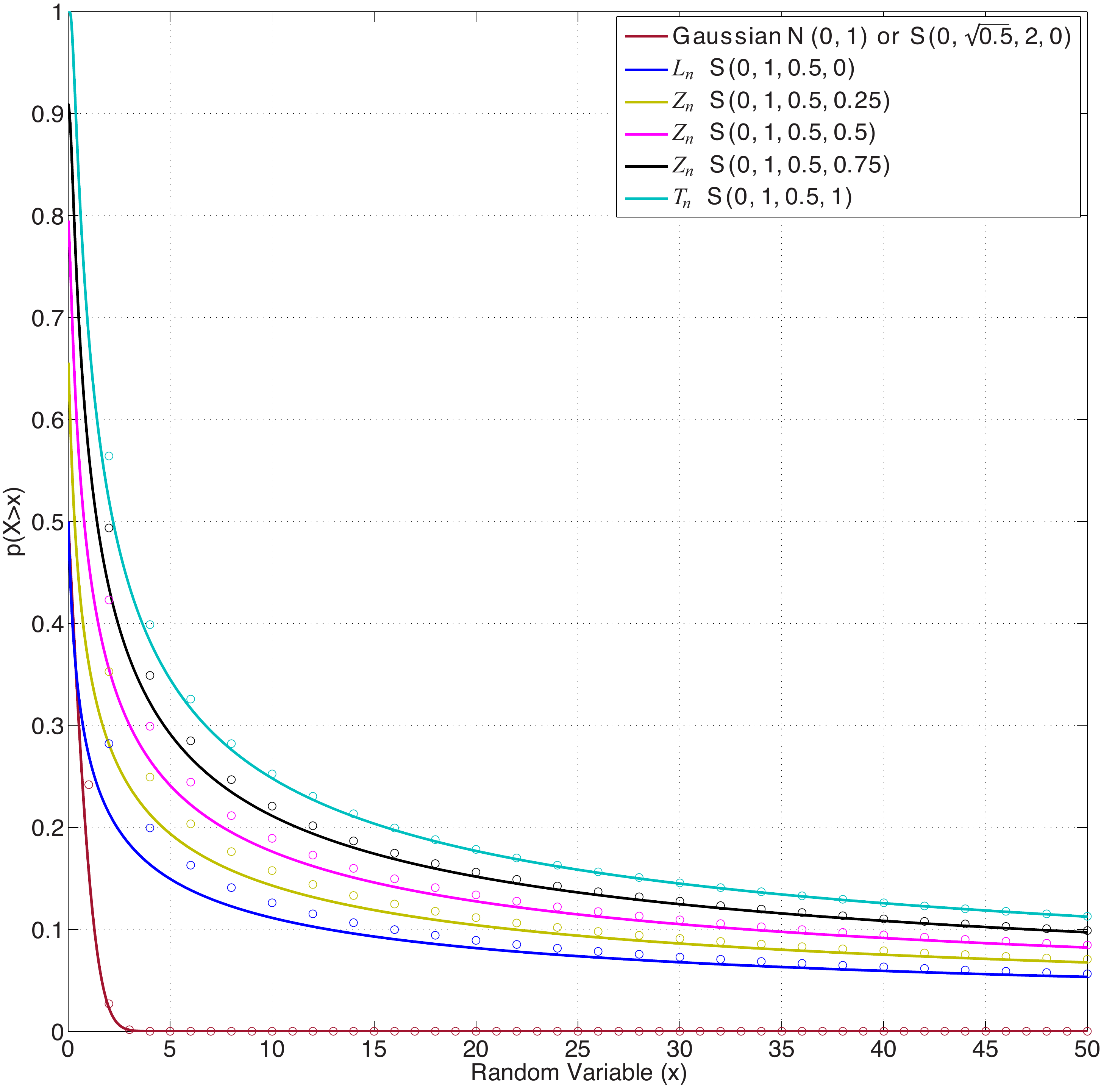}
	\end{center}
	\caption{\label{fig:stableTail} The tail of different stable random variables.}
\end{figure}

\section{Conclusions}
\label{sec:conc}
In this paper, we considered diffusion-based molecular communication timing channels. In particular, we considered three different class of molecular where the information is encoded in the: time of release of information particles, the time between release of two similar information particles, and the time between two different information particles. The channel models for all three different classes were presented as an additive noise channels. It was shown that the noise in all three classes are stable distributed random variables. As a consequence the noise have longer tails, and the effects of inter-symbol interference (ISI) can be more severe than additive Gaussian noise channels. Another interesting observation is that channel (C) can be reduced to channel (A), when one information particle has a much higher diffusion coefficient than the other, with the added benefit that no synchronization is required between the transmitter and the receiver.
%Since there are only three class of stable distributions with closed-form expressions for the probability density function (PDF) in terms of elementary functions, the PDF of the noise terms of the latter two channels were presented in terms of Voigt functions, which are used in other fields of science such as plasma physics. The PDF and the cumulative distribution function (CDF) of the noise terms where then numerically evaluated and compared with the Gaussian distribution, which has been used in some previous works. It was shown that the noise distributions exhibit longer tails that Gaussian distribution. An asymptotic approximation for the tail probabilities was presented and shown to converge to the actual tail probabilities quickly. 
As part of future work we will consider finding and comparing the probability of bit error for all three classes of timing channel presented.

\section*{Acknowledgment}
The authors would like to thank Professor John P. Nolan at American University for providing valuable correspondence on stable distributions.

\bibliographystyle{IEEEtran}
\bibliography{IEEEabrv,MolCom}
%%%%%%%%%%%%%%%%%%%%%%%%%%%%%%%%%%%%%%%%%%%%%%%%%%%%%%%%%%%%%%%%%%%%%%%%%%%%%%%

\end{document}